\newtheorem{theorem}[]{Theorem}[section]
\newtheorem{definition}[theorem]{Definition}
\newtheorem{corollary}[theorem]{Corollary}
\newtheorem{proposition}[theorem]{Proposition}
\newtheorem{example}[theorem]{Example}
\newtheorem{algo}[theorem]{Algorithm}
\newtheoremstyle{mycases}{}{}{}{}{}{}{\newline}{\textbf{\thmname{#1}\thmnumber{ #2:}}\thmnote{ \textit{#3}}}
\theoremstyle{mycases}
\newlist{caselist}{enumerate}{1}
\setlist[caselist]{label=\bfseries Case \arabic*:,ref=\bfseries Case \arabic*,labelindent=-1.5cm,leftmargin=*,widest="Case 6"}
\newcommand{\N}{\mathbb N} 
\newcommand{\Z}{\mathbb Z} 
\newcommand{\K}{\mathbb K} 
\newcommand{\sst}[2]{\left\lbrace #1\, \textnormal{\textbf{\textbar}}\,#2 \right\rbrace} 
\newcommand{\ie}{i.\,e.\ }
\newcommand{\eqspace}{\,}
\DeclareMathOperator{\ord}{ord}
\title{Rational Solutions of High-Order Algebraic Ordinary Differential Equations}
\author{Thieu N. Vo\thanks{Faculty of Mathematics and Statistics, Ton Duc Thang University, Ho Chi Minh City, Vietnam. 
Email: vongocthieu@tdt.edu.vn}
	\, and Yi Zhang\thanks{Johann Radon Institute for Computational and Applied Mathematics (RICAM), Austrian Academy of Sciences, Austria. 
	Supported by the Austrian Science Fund (FWF): P29467-N32. Email: zhangy@amss.ac.cn}
}
\date{\today}
\begin{document}
\maketitle
\begin{abstract}
We consider algebraic ordinary differential equations (AODEs) and study their polynomial and rational solutions.
A sufficient condition for an AODE to have a degree bound for its polynomial solutions is presented.
An AODE satisfying this condition is called \emph{noncritical}.
We prove that usual low order classes of AODEs are noncritical.
For rational solutions, we determine a class of AODEs, which are called \emph{maximally comparable}, 
such that the poles of their rational solutions are recognizable from their coefficients.
This generalizes a fact from linear AODEs, that the poles of their rational solutions are the zeros of the corresponding
highest coefficient.
An algorithm for determining all rational solutions, if there is any, of certain maximally comparable AODEs, 
which covers $78.54\%$ AODEs from a standard differential equations collection by Kamke, is presented.

\end{abstract}

\section{Introduction}

An algebraic ordinary differential equation (AODE) is of the form 
$$F(x,y,y',\ldots,y^{(n)})=0,$$ 
where $F$ is a polynomial in $y,y',\ldots,y^{(n)}$ with coefficients in $\mathbb{K}(x)$, the field of rational functions 
over an algebraically closed field $\K$ 
of characteristic zero, and $n \in \N$. 
For instance, $\K$ can be the field of complex numbers, 
or the field of algebraic numbers. 
Many problems from applications (such as physics, combinatorics and statistics) 
can be characterized in terms of AODEs.  
Therefore, determining (closed form) solutions of an AODE is one of the central problems in mathematics and computer science.
 
Although linear ODEs~\cite{Ince1926} have been intensively studied, there are still many challenging problems for solving
(nonlinear) AODEs. As far as we know, approaches for solving AODEs are only available for very specific subclasses. 
For example, Riccati equations, 
which have the form $y'=f_0(x)+f_1(x)y+f_2(x)y^2$ for some $f_0,f_1,f_2 \in \mathbb{K}(x)$, 
can be considered as the simplest form of nonlinear AODEs. 
In \cite{Kovacic}, Kovacic gives a complete algorithm for determining Liouvillian solutions of 
a Riccati equation with rational function coefficients. The study of general solutions without movable singularities 
can be found in \cite{Fuchs, Malmquist, Poincare} for first-order, and in \cite{Eremenko2, Ince1926} for higher-order AODEs. 

Since the problem of solving an arbitrary AODE is very difficult, 
it is natural to ask whether a given AODE admits some special kinds of solutions. 
We are interested in polynomials and rational functions. 
During the last two decades, an algebraic-geometric approach for finding rational solutions of AODEs has been developed. 
In \cite{Eremenko}, Eremenko gave a theoretical consideration for the existence of a degree bound for rational solutions of a first-order AODE.
In \cite{FengGao, FengGao06}, by using a view from algebraic curves, the authors provided polynomial time algorithms for determining rational 
(and algebraic) solutions of an autonomous first-order AODE.
The authors of~\cite{NgoWinkler11b, VoWinkler2015, VoGraseggerWinkler2017}
developed the methods for non-autonomous first-order AODEs. 

In this paper, we are interested in polynomial and rational solutions of arbitrary order AODEs and their properties.
We give a sufficient condition for an AODE to have a degree bound for its polynomial solutions,
and in the affirmative case, determine such a bound.
An AODE satisfying this condition is called \emph{noncritical}.
The easy determination of the condition allows us to confirm that several usual low order AODEs 
are noncritical (see Theorem~\ref{PROP:noncritical1} and~\ref{PROP:noncritical2}).
This result can be considered as a refinement of the works of polynomial solutions of Krushel'nitskij in~\cite{Krushel}, 
and Cano in~\cite{Cano2005}.

It well-known that every pole of rational solutions of a linear ODE with polynomial coefficients is a zero point 
of its highest coefficient.
This fact allows us to easily recognize possible poles of a rational solutions from the coefficients of a given linear AODE.
Unfortunately, this fact is no longer true for nonlinear AODEs. 
However, we show that there is a big subclass of AODEs in which this fact is still valid.
In order to do that, we equip the set of monomials in the unknown $y$ and its derivatives with a suitable partial order 
(see Definition~\ref{DEF:partialord}). If an AODE admits the highest monomial with respect to this ordering, 
then poles of its rational solutions can only occur at the zeros of the corresponding highest coefficient (Theorem~\ref{thm:movpol}).
This generalizes the same fact of linear AODEs to the nonlinear ones.
An AODE satisfying the existence of the highest monomial is called \emph{maximally comparable}.

The notion of maximally comparable AODEs already appears in \cite{Vo2018}, where the authors considered first-order AODEs only.
The authors proved that for every maximally comparable first-order AODE, there is a finite upper bound for the degrees of its rational solutions, 
together with an algorithm to determine the bound. 
Here, we extend the notion to high-order ones.
Unlike the first-order cases, there might be no such an upper bound for the higher order ones. 
We define a class of AODEs, called \emph{completely maximally comparable}, 
where the existence of an upper bound for its rational solutions is always guaranteed.
The class of maximally comparable AODEs covers $78.54\%$ AODEs from a standard collection by Kamke~\cite{Kamke}.
All of them are complete.
This suggests that completely maximally comparable AODEs, 
which are in the scope of our algorithm for determining all rational solutions (see Algorithm~\ref{ALGO:ratsol}), 
form a big subclass among AODEs.

The rest of the paper is organized as follows. Section \ref{sec:LaurentSeriesSolutions} 
is devoted for a study of order bounds for poles of a Laurent series solution of AODEs. 
In Section~\ref{sec:PolynomialSolutions}, we give a sufficient condition for an AODE to have a degree bound for its polynomial solutions.
We also prove that usual low order AODEs satisfy this condition.
Rational solutions of maximally comparable are considered in Section~\ref{sec:RationalSolutions}.
Finally, we perform a statistical investigation with a collection of AODEs from a standard text book by Kamke~\cite{Kamke}.

\section{An order bound for Laurent series solutions} \label{sec:LaurentSeriesSolutions}
 

This section can be considered as an alternative interpretation of the Newton polygon method for AODEs, specified for Laurent series solutions.
In particular, given an AODE, we show in Proposition~\ref{PROP:LaurentSeriesSols} 
that the orders of its Laurent series solutions at any point can be bounded in an algorithmic way.
The proposition yields an easy determination of the bound.
More general constructions which are applicable for wider 
classes of series solutions can be found in \cite{Cano2005,DoraJung1997,GrigorievSinger1991}.

Given $x_0 \in \mathbb{K} \cup \{\infty\}$, a Laurent series $f$ at $x=x_0$ has the 
form 
\[
\begin{array}{lll}
 \sum\limits_{k=m}^{\infty}{c_k(x-x_0)^k} &  & \ \ \text{if} \ \ x_0 \in \mathbb{K}, \\
 \sum\limits_{k=m}^{\infty}{c_kx^{-k}}    &  & \ \ \text{if} \ \ x_0 = \infty, \\
\end{array}
\]
where $c_k \in \mathbb{K}, c_m \neq 0$ and $m \in \Z$.
We call $-m$ \emph{the order} of $f$ (at $x = x_0$), and denote it by $\text{ord}_{x_0}(f)$. 
The coefficient $c_m$ is called \emph{the lowest coefficient} of $f$ (at $x = x_0$), 
and denoted by $c_{x_0}(f)$. 
Then we can rewrite~$f$ as follows:
\[
\begin{array}{lllll}
 c_{x_0}(f)(x-x_0)^{-\text{ord}_{x_0}(f)}  & + &  \text{ higher terms in } (x - x_0) &  & \ \ \text{if} \ \ x_0 \in \mathbb{K}, \\
 c_{\infty}(f)x^{\text{ord}_{x_0}(f)}  & + &  \text{ lower terms in } x    &  & \ \ \text{if} \ \ x_0 = \infty. \\
\end{array}
\]

For each $I = (i_0,i_1,\ldots,i_n) \in \mathbb{N}^{n+1}$ and $r \in \{ 0,\ldots,n \}$, 
we set $||I||_r=i_{r}+\ldots+i_{n}$. 
We simply write $||I||_0$ by $||I||$. 
Furthermore, the notation $||I||_\infty=i_1+2i_2+\ldots+ni_n$ will be also used frequently.


Let $F(y)=\sum\limits_{I \in \mathbb{N}^{n+1}}{f_{I}(x) y^{i_0} (y')^{i_1} \cdots (y^{(n)})^{i_n}} \in \mathbb{K}(x)\{y\}$ 
be a differential polynomial of order $n$. We will use the following notations:
\[
\begin{array}{lll}
\mathcal{E}(F) & = & \{I \in \mathbb{N}^{n+1} \,|\, f_{I} \neq 0\},\\
 d(F)           & = & \max \{||I|| \,|\, I \in \mathcal{E}(F)\},\\
\mathcal{D}(F) & = & \{I \in \mathcal{E}(F) \,|\, ||I||=d(F)\}.
\end{array}
\]
Moreover, for each $x_0 \in \mathbb{K}$, we denote 
\[
\begin{array}{lll}
m_{x_0}(F) & = & \max \{\ord_{x_0}f_{I}+||I||_\infty \,|\, I \in \mathcal{D}(F)\},\\
\mathcal{M}_{x_0}(F) & = &\{I \in \mathcal{D}(F) \,|\, \ord_{x_0}f_{I}+||I||_\infty=m_{x_0}(F)\},\\
\mathcal{P}_{x_0,F}(t) & = &\sum\limits_{I \in \mathcal{M}_{x_0}(F)}{c_{x_0}(f_{I}) \cdot \prod\limits_{r=0}^{n-1}{(-t-r)}^{||I||_{r+1}}},
\end{array}
\]
and if $\mathcal{E}(F) \setminus \mathcal{D}(F) \neq \emptyset$, we set
\begin{equation*}
b_{x_0}(F)= \max \left \{ \frac{\ord_{x_0} f_{I}+||I||_\infty-m_{x_0}(F)}{d(F)-||I||}  \mathrel{\Big |} 
I \in \mathcal{E}(F) \setminus \mathcal{D}(F) \right \}.
\end{equation*}

\noindent In case that $x_0 = \infty$, we also denote 
\[
\begin{array}{lll}
m_{\infty}(F) & = & \max \{\ord_{\infty}f_{I}-||I||_\infty \,|\, I \in \mathcal{D}(F)\},\\
\mathcal{M}_{\infty}(F) & = & \{I \in \mathcal{D}(F) \,|\, \ord_{\infty}f_{I} - ||I||_\infty=m_{\infty}(F)\},\\
\mathcal{P}_{\infty,F}(t) & = & \sum\limits_{I \in \mathcal{M}_{\infty}(F)}{c_{\infty}(f_{I}) \cdot \prod\limits_{r=0}^{n-1}{(t-r)}^{||I||_{r+1}}},
\end{array}
\]
and
\begin{equation*}
b_{\infty}(F)= \max \left \{ \frac{\ord_{\infty} f_{I}-||I||_\infty-m_{\infty}(F)}{d(F)-||I||} \mathrel{\Big |} 
I \in \mathcal{E}(F) \setminus \mathcal{D}(F) \right \}
\end{equation*}
if $\mathcal{E}(F) \setminus \mathcal{D}(F) \neq \emptyset$.

\begin{definition} \label{DEF:indicialpol}
Let $F(y) \in \K(x)\{y\}$ be a differential polynomial of order $n$. 
For each $x_0 \in \mathbb{K} \cup \{\infty\}$, 
we call $\mathcal{P}_{x_0,F}$ the \emph{indicial polynomial} of $F$ at $x=x_0$.
\end{definition}

Note that the above definition  is a generalization of 
the usual indicial polynomial~\cite{Yi2017, Ince1926} of linear ODEs.

\begin{proposition} \label{PROP:LaurentSeriesSols}
Given an AODE $F(y)=0$, and $x_0 \in \mathbb{K} \cup \{\infty\}$. 
If $r \geq 1$ is the order of a Laurent series solution of $F(y) = 0$ at $x=x_0$, 
then one of the following claims hold:
\begin{itemize}
\item[(i)] $\mathcal{E}(F) \setminus \mathcal{D}(F) \neq \emptyset$, and $r \leq b_{x_0}(F)$;
\item[(ii)] $r$ is a positive integer root of $\mathcal{P}_{x_0,F}(t)$.
\end{itemize}
\end{proposition}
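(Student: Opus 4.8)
The plan is to substitute the Laurent series solution into $F$ and isolate the term of highest order, which must vanish because $F(y)\equiv 0$. First I would write the solution at $x_0\in\K$ as $y=c\,(x-x_0)^{-r}+(\text{higher terms})$ with $c=c_{x_0}(y)\neq 0$ (and as $y=c\,x^{r}+(\text{lower terms})$ when $x_0=\infty$). The crucial preliminary computation is the effect of differentiation: since $r\geq 1$, none of the factors $-r,-r-1,\ldots$ vanishes, so $y^{(j)}$ has order exactly $r+j$ at $x_0\in\K$ with lowest coefficient $c\prod_{s=0}^{j-1}(-r-s)$. Multiplying across a monomial indexed by $I=(i_0,\ldots,i_n)$, I get that $y^{i_0}(y')^{i_1}\cdots(y^{(n)})^{i_n}$ has order $r\,||I||+||I||_\infty$ and lowest coefficient
\[
c^{||I||}\prod_{j=0}^{n}\prod_{s=0}^{j-1}(-r-s)^{i_j}.
\]
Reindexing this double product by collecting, for each fixed $s$, the exponent $\sum_{j>s} i_j=||I||_{s+1}$, turns it into $c^{||I||}\prod_{s=0}^{n-1}(-r-s)^{||I||_{s+1}}$, which is exactly the summand appearing in $\mathcal{P}_{x_0,F}$.

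Next I would determine which terms $f_I\,y^{i_0}\cdots(y^{(n)})^{i_n}$ contribute to the highest order of $F(y)$. The order of the $I$-th term is $\ord_{x_0}(f_I)+r\,||I||+||I||_\infty$. Because $r\geq 1$, the quantity $r\,||I||$ is largest exactly when $||I||=d(F)$, i.e.\ for $I\in\mathcal{D}(F)$; among those the order is maximized precisely on $\mathcal{M}_{x_0}(F)$, with common value $r\,d(F)+m_{x_0}(F)$. For $I\in\mathcal{E}(F)\setminus\mathcal{D}(F)$ one has $||I||<d(F)$, and a short rearrangement shows such a term can reach order $\geq r\,d(F)+m_{x_0}(F)$ only when $r\leq \frac{\ord_{x_0}(f_I)+||I||_\infty-m_{x_0}(F)}{d(F)-||I||}$, i.e.\ only when $r\leq b_{x_0}(F)$.

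I would then split into cases. If $\mathcal{E}(F)\setminus\mathcal{D}(F)\neq\emptyset$ and $r\leq b_{x_0}(F)$, claim (i) holds and there is nothing more to prove. Otherwise (either $\mathcal{E}(F)=\mathcal{D}(F)$, or $r>b_{x_0}(F)$) no lower-degree term reaches order $r\,d(F)+m_{x_0}(F)$, so this is the genuine highest order of $F(y)$ and the only contributions come from $\mathcal{M}_{x_0}(F)$. Since $F(y)\equiv 0$, the coefficient at this order must vanish; factoring out the common $c^{d(F)}\neq 0$ (every $I\in\mathcal{M}_{x_0}(F)$ shares $||I||=d(F)$), this coefficient equals $c^{d(F)}\,\mathcal{P}_{x_0,F}(r)$. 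Hence $\mathcal{P}_{x_0,F}(r)=0$, and as $r$ is a positive integer this is claim (ii). The case $x_0=\infty$ is entirely parallel, with $y^{(j)}$ of order $r-j$, lowest coefficient $c\prod_{s=0}^{j-1}(r-s)$, and the signs in $\mathcal{P}_{\infty,F}$ and $b_\infty(F)$ adjusted accordingly.

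The only real subtlety — and the step I would check most carefully — is the bookkeeping guaranteeing that the lowest-order coefficient of $F(y)$ is precisely $c^{d(F)}\mathcal{P}_{x_0,F}(r)$: one must verify both the reindexing of the differentiation product into the form $\prod_{s}(-r-s)^{||I||_{s+1}}$ and that the hypothesis $r\geq 1$ rules out any accidental lowering of $\ord_{x_0}(y^{(j)})$ that would spoil the order count. Everything else reduces to a routine maximization over the finite exponent set $\mathcal{E}(F)$.
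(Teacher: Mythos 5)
Your proposal is correct and follows essentially the same route as the paper's proof: substitute the Laurent series, compare the orders $\ord_{x_0}(f_I)+r\,||I||+||I||_\infty$ of the individual terms, identify the sum of the lowest coefficients of the dominant terms (those indexed by $\mathcal{M}_{x_0}(F)$) with $c_{x_0}(y)^{d(F)}\mathcal{P}_{x_0,F}(r)$, and conclude that either this top coefficient is cancelled by terms from $\mathcal{E}(F)\setminus\mathcal{D}(F)$ (forcing $r\le b_{x_0}(F)$) or it must vanish (forcing $\mathcal{P}_{x_0,F}(r)=0$). The only difference is organizational --- the paper splits $F(z)$ into four sums and cases on whether cancellation occurs inside the dominant sum, while you case on whether claim (i) holds and derive (ii) from its negation --- and your reindexing of the differentiation product into $\prod_{s=0}^{n-1}(-r-s)^{||I||_{s+1}}$ matches the paper's computation exactly.
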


\begin{proof}
Let $F(y)=\sum\limits_{I \in \mathbb{N}^{n+1}}{f_I(x)y^{i_0}(y')^{i_1} \ldots (y^{(n)})^{i_n}} \in \mathbb{K}(x)\{y\}$ be 
a differential polynomial of order $n$.
Let $x_0 \in \mathbb{K}$ and $z \in \mathbb{K}((x-x_0)) \setminus \mathbb{K}$ be 
a Laurent series solution of $F(y) = 0$ of order $r \geq 1$.
Then $z^{(k)}$ is of order $k + r$ for each $k \in \mathbb{N}$. 
For each $I \in \mathcal{E}(F)$, we may write the coefficient $f_I$ in the following form:
\[f_I = \frac{c_{x_0}(f_I)}{(x-x_0)^{\ord_{x_0}f_I}} + h_I, \]
where $h_I \in \mathbb{K}((x))$ and $\ord_{x_0}{h_I}<\ord_{x_0}f_{I}$. 
Since $z$ is a solution of $F(y) = 0$, we have

\[
\begin{array}{lll}
 0 & = & F(z) \\
   & = & S_1 + S_2 + S_3 + S_4,
\end{array}
\]
where 
\[
\begin{array}{lll}
 S_1 = \sum\limits_{I \in \mathcal{M}_{x_0}(F)}{\frac{c_{x_0}(f_I)}{(x-x_0)^{\ord_{x_0}f_I}} \cdot z^{i_0} (z')^{i_1} \cdots (z^{(n)})^{i_n}}, 
 &  & S_2 =  \sum\limits_{I \in \mathcal{M}_{x_0}(F)}{h_I \cdot z^{i_0} (z')^{i_1} \cdots (z^{(n)})^{i_n}},\\
 S_3 = \sum\limits_{I \in \mathcal{D}(F) \setminus \mathcal{M}_{x_0}(F)}{f_I  z^{i_0} (z')^{i_1} \cdots (z^{(n)})^{i_n}}, 
 &  &  S_4 = \sum\limits_{I \in \mathcal{E}(F) \setminus \mathcal{D}(F)}{f_I  z^{i_0} (z')^{i_1} \cdots (z^{(n)})^{i_n}}. 
\end{array}
\]
The order of each term in $S_1$ are equal to $D= d(F) r+m_{x_0}(F)$, which is strictly larger than 
that of each term in $S_2$ and $S_3$. 
One of the two following cases will happen:
\begin{caselist}
\item The order of $S_1$ is equal to $D$. 
Then the term of order $D$ in $S_1$ must be killed by terms of $S_4$.
In this case, we have $\mathcal{E}(F) \setminus \mathcal{D}(F) \neq \emptyset$. 
By comparing with the orders of terms in $S_4$, we obtain
\begin{equation*}
D \leq \max \sst{||I|| \cdot r+||I||_\infty+\ord_{x_0} f_I}{I \in \mathcal{E}(F) \setminus \mathcal{D}(F)} \eqspace.
\end{equation*}
On the other hand, since $D = d(F) r+m_{x_0}(F)$, we conclude that
\begin{equation*}
r \leq \max\left\{\frac{||I||_\infty+\ord_{x_0} f_I-m_{x_0}(F)}{d(F)-||I||} \mathrel{\Big |} I \in \mathcal{E}(F) \setminus \mathcal{D}(F) \right\}.
\end{equation*}
In other words, $r \leq b_{x_0}(F)$.
\item The order of $S_1$ is strictly smaller than $D$. 
For each $k \in \mathbb{N}$, a direct computation implies that the lowest coefficient $z^{(k)}$ at $x = x_0$ is
\[
c_{x_0}(z^{(k)}) = c_{x_0}(z) \prod\limits_{s=1}^{k}{(-r-s+1)}.
\]
Therefore, the lowest coefficient of the term indexed by $I \in \mathcal{M}_{x_0}(F)$ in $S_1$ is 
\begin{align*}
c_{x_0}(f_I) \cdot \prod\limits_{k=0}^{n} \left({c_{x_0}(z) \prod\limits_{s=1}^{k}{(-r-s+1)}} \right)^{i_k} 
= c_{x_0}(f_I) c_{x_0}(y)^{||I||} \prod\limits_{s=1}^{n}{(-r-s+1)^{||I||_s}}. 
\end{align*}
Since the orders of terms in $S_1$ are the same and they are strictly larger than that of $S_1$, 
the sum of those lowest coefficients must be zero. 
In other words, we have 
\begin{align*}
\sum\limits_{I \in \mathcal{M}_{x_0}(F)}{ c_{x_0}(f_I) c_{x_0}(y)^{||I||} \prod\limits_{s=1}^{n}{(-r-s+1)^{||I||_s}} }=0. 
\end{align*}
The left side of the above equality is exactly $c_{x_0}(y)^{d(F)} \cdot \mathcal{P}_{x_0,F}(r)$. 
Hence, $r$ is a positive integer root of $\mathcal{P}_{x_0,F}(r)$.
\end{caselist}

The case that $x_0 = \infty$ can be proved in a similar way.
\end{proof}


For a linear homogeneous ordinary differential equation $F(y) = 0$, 
item (i) of the above theorem will never happen because $\mathcal{E}(F) = \mathcal{D}(F)$.

Consider an AODE $F(y) = 0$. If $\mathcal{E}(F) = \mathcal{D}(F)$ and the indicial polynomial $\mathcal{P}_{x_0,F}(t)$ is zero, 
then Theorem~\ref{PROP:LaurentSeriesSols} does not give any information for the order bound of Laurent series solution of $F(y) =0$ 
at $x = x_0$. In the next section, we will give an example (Example~\ref{exa:Nondicritical}) 
that the order can be arbitrarily high in this case. 

\section{Polynomial solutions of noncritical AODEs} \label{sec:PolynomialSolutions}


In \cite{Krushel}, Krushel'nitskij discusses the properties of the degree of a polynomial solution for a given AODE. 
By using the Newton polygon at infinity, Cano proposes an algorithm for determining a bound for 
the degrees of polynomial solutions of an AODE provided that the Newton polygon of the given AODE must satisfies certain additional assumption (see \cite[Section~2.2]{Cano2005}).
Whenever a degree bound is found, one can determine all polynomial solutions by undeterminate coefficient method.
However, to the best of our knowledge, no full algorithm for computing all polynomial solutions of AODEs exists so far.

In this section, we use Proposition~\ref{PROP:LaurentSeriesSols} to give a sufficient condition (Definition~\ref{def:Noncritical}) 
for the existence of a bound 
for the degrees of polynomial solutions.
We prove that several usual classes of AODEs satisfy this sufficient condition (Theorem~\ref{PROP:noncritical1} and Theorem~\ref{PROP:noncritical2}).
Furthermore, we will show in Section~\ref{sec:ExperimentalResults} that all of AODEs 
in Kamke's collection~\cite{Kamke} satisfy the sufficient condition.


\begin{definition} \label{def:Noncritical}
An AODE $F(y)=0$ is called \emph{noncritical} if $\mathcal{P}_{\infty,F}(t) \neq 0$.
\end{definition}


\begin{corollary} \label{COR:noncritical}
If an AODE $F(y)=0$ is noncritical, then there exists a bound for the degree of its polynomial solutions.
\end{corollary}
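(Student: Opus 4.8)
The plan is to recognize a polynomial solution as a particular Laurent series solution at the point $x=\infty$ and then to invoke Proposition~\ref{PROP:LaurentSeriesSols}. Concretely, I would first observe that a polynomial $p \in \K[x]$ of degree $d \geq 1$, viewed as a Laurent series at $x=\infty$ in the sense of the conventions fixed above, has leading term $c\,x^{d}$ with $c \neq 0$; under the sign convention for the order at infinity this means $\ord_{\infty}(p)=d$. Hence every nonconstant polynomial solution of $F(y)=0$ is a Laurent series solution at $x=\infty$ whose order $r=d$ is a positive integer, so the hypothesis $r \geq 1$ of the proposition is met. (Constant solutions have degree $0$ and require no bound, and the zero solution is excluded as it is not a Laurent series of the required form.)

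Next I would apply Proposition~\ref{PROP:LaurentSeriesSols} with $x_0=\infty$ and $r=d$. The proposition leaves exactly two possibilities. In case (i) we have $\mathcal{E}(F)\setminus\mathcal{D}(F)\neq\emptyset$ and $d=r\leq b_{\infty}(F)$; since $b_{\infty}(F)$ is a fixed rational number depending only on $F$, this already bounds $d$. In case (ii), $d$ is a positive integer root of the indicial polynomial $\mathcal{P}_{\infty,F}(t)$.

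Here the noncritical hypothesis enters, and it is the only place it is used. By Definition~\ref{def:Noncritical}, $\mathcal{P}_{\infty,F}(t)\neq 0$, so it is a nonzero univariate polynomial and therefore has only finitely many roots. Consequently its positive integer roots form a finite (possibly empty) set, and in case (ii) the value $d$ is bounded by the largest such root, say $\rho$ (with the understanding that this case simply does not occur if there is no positive integer root). Combining the two cases, every polynomial solution of $F(y)=0$ has degree at most
\[
N=\max\bigl\{\, b_{\infty}(F),\ \rho \,\bigr\},
\]
where an entry is omitted if the corresponding case cannot arise (e.g.\ $b_{\infty}(F)$ is dropped when $\mathcal{E}(F)=\mathcal{D}(F)$). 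This $N$ is finite and depends only on $F$, which is the asserted degree bound.

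The argument is short once Proposition~\ref{PROP:LaurentSeriesSols} is available, so I do not anticipate a genuine obstacle; the only subtle point is the bookkeeping at infinity, namely verifying that the order convention makes $\ord_{\infty}(p)$ equal to $+d$ so that a polynomial solution genuinely has \emph{positive} order. I would also flag, for contrast, why noncriticality is indispensable: without $\mathcal{P}_{\infty,F}\neq 0$ the set of admissible integer roots in case (ii) could be unbounded, and indeed the remark following Proposition~\ref{PROP:LaurentSeriesSols}, together with Example~\ref{exa:Nondicritical}, exhibits AODEs whose Laurent orders at infinity are arbitrarily large.
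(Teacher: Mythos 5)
Your proof is correct and takes exactly the route the paper intends: the paper's own proof of Corollary~\ref{COR:noncritical} is the single line ``Straightforward from Theorem~\ref{PROP:LaurentSeriesSols}'', and your write-up supplies precisely that application --- viewing a degree-$d$ polynomial solution as a Laurent series of positive order $d$ at $x=\infty$, then bounding $d$ either by $b_{\infty}(F)$ in case (i) or by the largest positive integer root of the nonzero indicial polynomial $\mathcal{P}_{\infty,F}(t)$ in case (ii), which is where noncriticality is used. Your combined bound $\max\{b_{\infty}(F),\rho\}$ is also exactly the bound the paper itself uses in Algorithm~\ref{ALGO:polsol}, so there is nothing to add.
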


\begin{proof}
Straightforward from Theorem~\ref{PROP:LaurentSeriesSols}. 
\end{proof}

\begin{algo} \label{ALGO:polsol}
Given a noncritical AODE $F(y) = 0$, compute all its polynomial solutions.
\begin{itemize}
\item [(1)] Compute $\mathcal{P}_{\infty,F}(t)$. 
If $\mathcal{P}_{\infty,F}(t)$ has integer roots, then set $r_1$ to be the largest integer root. 
Otherwise, set $r_1 = 0$.
\item [(2)] Compute $r_2 = \lfloor b_{\infty}(F) \rfloor$ if $\mathcal{E}(F) \setminus \mathcal{D}(F) \neq \emptyset$. 
Otherwise set $r_2 = 0$.
\item [(3)] Set $r = \max\{r_1, r_2, 0\}$. Make an ansatz $z = \sum_{i = 0}^r c_i x^i$, where $c_i$'s are unknown. 
Substitute $z$ into $F(y) = 0$ and solve the corresponding algebraic equations by using Gr\"{o}bner bases. 
\item [(4)] Return the solutions from the above step. 
\end{itemize}
\end{algo}


The termination of Algorithm \ref{ALGO:polsol} is obvious. 
The correctness follows from Theorem~\ref{PROP:LaurentSeriesSols}.

\begin{example}[Kamke 6.234 \cite{Kamke}]
Consider the differential equation:
\begin{equation} \label{exa:findPolynomialSolutions}
F(y)=a^2y^2y''^2-2a^2yy'^2y''+a^2y'^4-b^2y''^2-y'^2=0,
\end{equation}
where $a,b \in \mathbb{K}$ and $a \neq 0$. 
The following table is a list of the exponents of terms of $F$ and related information.

\begin{center}
\begin{tabular}{|c c c c|}
\hline
$I \in \mathcal{E}(F)$ & $||I||$ & $||I||_\infty$ & $f_I$\\
\hline
$(2,0,2)$ & $4$ & $4$ & $a^2$\\
$(1,2,1)$ & $4$ & $4$ & $-2a^2$\\
$(0,4,0)$ & $4$ & $4$ & $a^2$\\
\hline
$(0,0,2)$ & $2$ & $4$ & $-b^2$\\
$(0,2,0)$ & $2$ & $2$ & $-1$\\
\hline
\end{tabular}
\end{center}

From the above table we see that $\mathcal{D}(F)$ is the set of exponents in the first three lines, 
and $\mathcal{E}(F) \setminus \mathcal{D}(F)$ is the set of exponents in the last two lines. 
A direct computation shows that $m_\infty(F)=-4$, $\mathcal{M}_{\infty}(F)=\mathcal{D}(F)$, 
and $\mathcal{P}_{\infty,F}(t)=a^2t^2 \neq 0$. 
Therefore, the differential equation~\eqref{exa:findPolynomialSolutions} is noncritical. 
Furthermore, we find that $b_{\infty}(F)=1$. 

By Theorem~\ref{PROP:LaurentSeriesSols}, 
every polynomial solution of~\eqref{exa:findPolynomialSolutions} has degree at most 1. 
By making an ansatz and solving the corresponding algebraic equations, 
we obtain all polynomial solutions, 
which are $c$, $c+\frac{x}{a}$, and $c-\frac{x}{a}$, where $c$ is an arbitrary constant in $\K$.
\end{example}

Through our investigation, almost all AODEs we see in the literature are noncritical (see Section \ref{sec:ExperimentalResults}). 
Only few of them are not noncritical. Below is one example for a critical AODE. 

\begin{example} \label{exa:Nondicritical}
Consider the following differential equation~\cite{GrigorievSinger1991,Eremenko}:
$$F(y) = xyy''-xy'^2+yy'=0.$$ 
By computation, we find that its indicial polynomial is zero. 
So, $F(y) = 0$ is a critical AODE.
Actually, it has polynomial solutions $z = cx^n$ for arbitrary $c \in \mathbb{K}$ and $n \in \mathbb{N}$.
\end{example}

We show in the next two theorems that noncritical AODEs cover most of usual low order AODEs. 

\begin{theorem} \label{PROP:noncritical1}
Let $\mathcal{L} \in \mathbb{K}(x) \left[ \frac{\partial}{\partial x} \right]$ be a differential operator, 
and $P(x,y,z) \in \mathbb{K}(x)[y,z]$ a polynomial in two variables with coefficients in $\mathbb{K}(x)$. 
Then for each $n > 0$, the differential equation $\mathcal{L}(y)+P(x,y,y^{(n)})=0$ is noncritical.

In particular, linear AODEs, first-order AODEs (which have the form $F(x,y,y')=0$ for some $F \in \mathbb{K}(x)[y,y']$), 
and quasi-linear second-order AODEs 
(which have the form $y''+G(x,y,y')=0$ for some $G \in \mathbb{K}(x)[y,y']$), are noncritical.
\end{theorem}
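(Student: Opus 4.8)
The plan is to show that $\mathcal{P}_{\infty,F}(t) \neq 0$ for the differential equation $G(y) := \mathcal{L}(y)+P(x,y,y^{(n)})=0$. By Definition~\ref{def:Noncritical}, noncriticality is exactly the nonvanishing of this indicial polynomial, so the entire task reduces to analysing the structure of $\mathcal{M}_\infty(G)$ and the product formula defining $\mathcal{P}_{\infty,G}$. The key observation I would exploit is that $\mathcal{P}_{\infty,G}(t)$ is a sum over the multi-indices $I \in \mathcal{M}_\infty(G)$ of terms $c_\infty(f_I) \prod_{r=0}^{n-1}(t-r)^{||I||_{r+1}}$, and that distinct multi-index ``shapes'' produce products with different factorizations in $t$; the real content is to find a monomial whose contribution cannot be cancelled.

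\textbf{Step 1: reduce to the top-degree part.} First I would write $G$ as a differential polynomial and identify $\mathcal{D}(G)$, the multi-indices of maximal total degree $d(G)$. Since $\mathcal{L}(y)$ is linear, every monomial coming from $\mathcal{L}$ has total degree $1$; so unless $P$ is also of degree $\le 1$ (the linear case, where $\mathcal{E}(G)=\mathcal{D}(G)$ and the classical indicial polynomial is manifestly nonzero), the set $\mathcal{D}(G)$ is governed entirely by the highest-degree part of $P(x,y,y^{(n)})$. The crucial point is that $P$ involves only the two derivative-orders $0$ and $n$, so any $I \in \mathcal{D}(G)$ has the form $(i_0,0,\ldots,0,i_n)$ with $i_0 + i_n = d(G)$.

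\textbf{Step 2: pin down $\mathcal{M}_\infty(G)$ and the product.} For such an $I$, the factor $\prod_{r=0}^{n-1}(t-r)^{||I||_{r+1}}$ collapses dramatically, because $||I||_{r+1}=i_n$ for all $r \le n-1$ (the only nonzero higher entry is $i_n$ in slot $n$). Hence each contributing monomial yields, up to the constant $c_\infty(f_I)$, the polynomial $\bigl((t-0)(t-1)\cdots(t-(n-1))\bigr)^{i_n} = \bigl(\prod_{r=0}^{n-1}(t-r)\bigr)^{i_n}$, which depends on $I$ only through $i_n$. I would then argue that among the $I \in \mathcal{M}_\infty(G)$ there is a unique maximal value of $i_n$, and the corresponding term has the strictly highest degree in $t$ (degree $n\,i_n$) so it cannot be cancelled by any other term; therefore $\mathcal{P}_{\infty,G}(t) \neq 0$. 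The degree-in-$t$ count is the clean way to rule out cancellation without tracking the individual coefficients $c_\infty(f_I)$.

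\textbf{The main obstacle} I anticipate is handling the selection imposed by $\mathcal{M}_\infty(G)$ rather than all of $\mathcal{D}(G)$: passing to $\mathcal{M}_\infty$ filters by the quantity $\ord_\infty f_I - ||I||_\infty$, and one must check that the maximal-$i_n$ monomial actually survives this filter, or else argue via the degree-in-$t$ comparison restricted to whatever survives. A careful treatment must also cover the degenerate case where $P$ contributes nothing of degree $>1$, so that $\mathcal{D}(G)$ mixes the linear operator $\mathcal{L}$ and the linear part of $P$; there the classical theory of the indicial equation for a nonzero linear operator guarantees $\mathcal{P}_{\infty,G} \neq 0$, and I would invoke the remark after Proposition~\ref{PROP:LaurentSeriesSols} that $\mathcal{E}=\mathcal{D}$ for linear operators. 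Finally, the ``in particular'' assertions follow by specialization: linear AODEs take $P=0$; first-order AODEs are the case $n=1$ with $\mathcal{L}=0$; and quasi-linear second-order AODEs arise with $\mathcal{L}=\partial^2/\partial x^2$, $n=2$ applied so that the leading $y''$ term forces a nonzero top coefficient.
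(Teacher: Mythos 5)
Your proposal is correct and follows essentially the same route as the paper: split into the linear case and the case $\deg P \geq 2$, observe that every $I \in \mathcal{M}_{\infty}(F)$ then has the form $(i_0,0,\ldots,0,i_n)$ with $i_0+i_n=d(F)$ (so distinct surviving indices automatically have distinct $i_n$), note that each such index contributes $c_{\infty}(f_I)\bigl(t(t-1)\cdots(t-n+1)\bigr)^{i_n}$, and conclude non-cancellation --- your degree-in-$t$ comparison is just an equivalent phrasing of the paper's argument that the nonzero univariate polynomial $g(T)$ evaluated at the transcendental element $T=t(t-1)\cdots(t-n+1)$ cannot vanish. Only your specialization clause needs touching up: quasi-linear second-order equations $y''+G(x,y,y')=0$ correspond to $n=1$ (not $n=2$) with $\mathcal{L}=\partial^2/\partial x^2$ and $P=G$; inhomogeneous linear equations require $P$ to be the constant term rather than $P=0$; and $\mathcal{E}(F)=\mathcal{D}(F)$ holds only for linear \emph{homogeneous} equations, though none of this affects your core argument.
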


\begin{proof}
Let $F(y)=\mathcal{L}(y)+P(x,y,y^{(n)})$. 
We prove that $\mathcal{P}_{\infty,F}$ is nonzero.

First, we consider the case that $P$ is a linear polynomial in $y$ and $z$. Then $F$ is a linear differential polynomial, say
$$F(y)=f_{I_{-1}}+f_{I_{0}}y+\cdots+f_{I_m}y^{(m)},$$
where $f_{I_i} \in \mathbb{K}(x)$ and $f_{I_m} \neq 0$ and $m \in \mathbb{N}$. 
A direct computation shows that the indicial polynomial of $F$ 
at infinity is of the form
$$\mathcal{P}_{\infty,F}(t) =
\sum_{\substack{i=0,\ldots,m \\ I_i \in \mathcal{M}_{\infty}(F)}}{c_{\infty}(f_{I_i}) \cdot \prod_{s = 1}^i (t - s + 1)},$$
which is a nonzero polynomial. Therefore, linear AODEs are noncritical.

Next, assume that $P$ is of total degree at least $2$. Then 
we have $\mathcal{D}(F)=\mathcal{D}(P(x,y,y^{(n)}))$ and $\mathcal{M}_{\infty}(F)=\mathcal{M}_{\infty}(P(x,y,y^{(n)}))$. 
We write $P(x,y,y^{(n)})$ in the form
$$P(x,y,y^{(n)})=\sum\limits_{(i,j) \in \mathbb{N}^2}{f_{i,j}(x)y^i(y^{(n)})^j}.$$
Then $\mathcal{M}_{\infty}(F)$ consists of elements of the 
form $e_{i,j}=(i,0,\ldots,0,j) \in \mathbb{N}^{n+1}$. A direct calculation reveals that  
$$\mathcal{P}_{\infty,F}(t)=
\sum_{\substack{j=1,\ldots,n \\ e_{i,j} \in \mathcal{M}_{\infty}(F)}} {c_{\infty}(f_{i,j}) \cdot \left[ t(t-1) \cdots (t-n+1) \right]^j}.$$
The indicial polynomial $\mathcal{P}_{\infty,F}(t)$ can be viewed as the evaluation of the 
nonzero univariate polynomial 
$$g(T) = \sum\limits_{\substack{j=1,\ldots,n \\ e_{i,j} \in \mathcal{M}_{\infty}(F)}} {c_{\infty}(f_{i,j}) \cdot T^j} 
\ \ \text{ at } \ \ T = t(t-1) \cdots (t-n+1).$$ 
On the other hand, since $t(t-1) \cdots (t-n+1)$ is transcendental over $\K$, 
we conclude that $\mathcal{P}_{\infty,F} \neq 0$.
\end{proof}

\begin{theorem} \label{PROP:noncritical2}
Let $\mathcal{L} \in \mathbb{K}(x) \left[ \frac{\partial}{\partial x} \right]$ be a differential operator with coefficients in $\mathbb{K}(x)$, 
and $Q(y,z,w) \in \mathbb{K}[y,z,w]$ a polynomial in three variables with coefficients in $\mathbb{K}$. 
Then for each $m, n > 0$, the differential equation $\mathcal{L}(y)+Q(y,y^{(n)},y^{(m)})=0$ is noncritical.

In particular, autonomous second-order AODEs (which have the form $F(y,y',y'')=0$ for some $F \in \mathbb{K}[y,y',y'']$), 
and quasi-linear autonomous third-order AODEs (which have the form $y'''+G(y,y',y'')=0$ for some $G \in \mathbb{K}[y,y',y'']$), are noncritical.
\end{theorem}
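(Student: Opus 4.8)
The plan is to verify Definition~\ref{def:Noncritical}, that is, to show $\mathcal{P}_{\infty,F}(t)\neq 0$ for $F(y)=\mathcal{L}(y)+Q(y,y^{(n)},y^{(m)})$, by following the two-case skeleton of the proof of Theorem~\ref{PROP:noncritical1} but replacing its transcendence argument by an order-of-vanishing argument at $t=0$. This replacement is the genuinely new ingredient, forced by the presence of two nonlinear variables $y^{(n)},y^{(m)}$ rather than one. First I would reduce. If $n=m$, then $Q(y,y^{(n)},y^{(n)})$ is a polynomial in $y$ and $y^{(n)}$ with constant coefficients, so $F$ has exactly the shape treated in Theorem~\ref{PROP:noncritical1} and is noncritical; and after possibly swapping the last two arguments of $Q$, I may assume $n<m$. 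If $Q$ has total degree at most $1$, then $F$ is a linear differential polynomial and $\mathcal{P}_{\infty,F}\neq 0$ precisely as in the linear case of Theorem~\ref{PROP:noncritical1}. So the remaining case is $\deg Q\geq 2$.

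Next, since every monomial coming from $\mathcal{L}(y)$ has total degree $1<d(F)=\deg Q$, I obtain $\mathcal{D}(F)=\mathcal{D}(Q(y,y^{(n)},y^{(m)}))$ and $\mathcal{M}_{\infty}(F)=\mathcal{M}_{\infty}(Q(y,y^{(n)},y^{(m)}))$, so that only monomials $y^{i}(y^{(n)})^{j}(y^{(m)})^{k}$, with exponent vector $e_{i,j,k}$, contribute to the indicial polynomial; in particular the order of $\mathcal{L}$ is irrelevant here. For such a vector one checks that $||e_{i,j,k}||_{r+1}=j+k$ for $0\leq r\leq n-1$, equals $k$ for $n\leq r\leq m-1$, and vanishes for $r\geq m$. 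Hence the product $\prod_{r=0}^{N-1}(t-r)^{||e_{i,j,k}||_{r+1}}$ in the definition of $\mathcal{P}_{\infty,F}$ (with $N$ the order of $F$) collapses to $A(t)^{j}B(t)^{k}$, where $A(t)=\prod_{r=0}^{n-1}(t-r)$ and $B(t)=\prod_{r=0}^{m-1}(t-r)$, so that $\mathcal{P}_{\infty,F}(t)=\sum_{e_{i,j,k}\in\mathcal{M}_{\infty}(F)}c_{\infty}(f_{e_{i,j,k}})\,A(t)^{j}B(t)^{k}$.

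The crux, and the step I expect to be the main obstacle, is to show that this sum does not vanish identically. Here I cannot argue as in Theorem~\ref{PROP:noncritical1}: the polynomials $A(t)$ and $B(t)$ are algebraically dependent over $\K$, so there is no transcendental substitution to appeal to, and \emph{a priori} distinct terms might cancel. The key observation is that the coefficients of $Q$ are constants, whence $\ord_{\infty}f_{e_{i,j,k}}=0$ and every index vector in $\mathcal{M}_{\infty}(F)$ satisfies $||e_{i,j,k}||_{\infty}=nj+mk=\mu$ for a single minimal value $\mu$; that is, all contributing pairs $(j,k)$ lie on one line. Because $n\neq m$, the map $(j,k)\mapsto j+k$ is injective on that line, and since $\ord_{0}A=\ord_{0}B=1$ we have $\ord_{0}(A^{j}B^{k})=j+k$. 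Thus the polynomials $A(t)^{j}B(t)^{k}$ occurring in the sum have pairwise distinct orders of vanishing at $t=0$ and are therefore $\K$-linearly independent; as $\mathcal{M}_{\infty}(F)\neq\emptyset$ and each coefficient $c_{\infty}(f_{e_{i,j,k}})$ is a nonzero constant, the sum is nonzero, forcing $\mathcal{P}_{\infty,F}\neq 0$. The two ``in particular'' statements are the instances $\mathcal{L}=0,\,n=1,\,m=2$ and $\mathcal{L}=(\partial/\partial x)^{3},\,n=1,\,m=2$; note that in the latter the high order of $\mathcal{L}$ is harmless, since in this case it never enters $\mathcal{D}(F)$ or the indicial polynomial. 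It is exactly the constancy of the coefficients of $Q$ that confines $\mathcal{M}_{\infty}(F)$ to a single line and makes the non-cancellation step succeed.
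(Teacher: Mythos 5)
Your proof is correct, and it follows the paper's skeleton up to the decisive step: like the paper, you reduce to $\deg Q\geq 2$ (delegating the linear case to Theorem~\ref{PROP:noncritical1}), note that $\mathcal{D}(F)=\mathcal{D}(Q(y,y^{(n)},y^{(m)}))$ and $\mathcal{M}_{\infty}(F)=\mathcal{M}_{\infty}(Q(y,y^{(n)},y^{(m)}))$, collapse $\prod_{r}(t-r)^{||I||_{r+1}}$ into powers of two fixed polynomials (your $A(t)^{j}B(t)^{k}$; the paper's $A^{j+k}B^{k}$ with slightly different $A,B$, which is the same factorization), and exploit the constancy of $Q$'s coefficients to confine $\mathcal{M}_{\infty}(F)$ to a single line $nj+mk=\mu$. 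Where you genuinely diverge is the non-vanishing argument, which is the crux of the theorem. The paper uses the line constraint to eliminate $j$, factors out the fractional power $A^{m_{\infty}(F)/m}$, and views the remaining sum as a nonzero univariate polynomial $h(T)$ (nonzero by the same injectivity-on-the-line observation you make) evaluated at the algebraic function $T=B/A^{(n-m)/m}$, concluding because this element is transcendental over $\K$; this mirrors the transcendence step of Theorem~\ref{PROP:noncritical1}, but at the cost of working with fractional powers in an algebraic extension of $\K(t)$. You instead stay inside $\K[t]$: both $A$ and $B$ have a simple zero at $t=0$, so the monomials $A^{j}B^{k}$ occurring in $\mathcal{P}_{\infty,F}$ vanish at $t=0$ to pairwise distinct orders $j+k$ (distinct precisely because $(j,k)\mapsto j+k$ is injective on the line $nj+mk=\mu$ when $n\neq m$), hence they are $\K$-linearly independent and no cancellation can occur. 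Your valuation-at-zero argument is more elementary and sidesteps the delicacies of fractional powers and transcendence; the paper's version buys a uniform technique shared with Theorem~\ref{PROP:noncritical1}. A minor additional merit of your write-up is that you dispose of the case $n=m$ explicitly via Theorem~\ref{PROP:noncritical1}, whereas the paper's ``without loss of generality $0<m<n$'' passes over it silently.
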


\begin{proof}
Let $F(y)=\mathcal{L}(y)+Q(y,y^{(m)},y^{(n)})$. Without loss of generality, we can assume that $0<m<n$. 
As we have seen from the previous proposition, a linear AODE is noncritical. 
Therefore we can assume further that $Q$ is of total degree at least $2$. 
Then we have $\mathcal{D}(F)=\mathcal{D}(Q(y,y^{(m)},y^{(n)}))$ and $\mathcal{M}_{\infty}(F)=\mathcal{M}_{\infty}(Q(y,y^{(m)},y^{(n)}))$. 
Let us write $Q(y,y^{(m)},y^{(n)})$ in the form
$$Q(y,y^{(m)},y^{(n)})=\sum\limits_{(ijk) \in \mathbb{N}^3} {f_{ijk}y^i(y^{(m)})^j (y^{(n)})^k}.$$
For simplicity, we denote $e_{ijk}=(i,0,\ldots,0,j,0,\ldots,0,k) \in \mathbb{N}^{n+1}$, 
where $j$ is the $(m+1)$-th coordinate. 
Then $\mathcal{M}_{\infty}(F)$ consists of all $e_{ijk}$ such that $i+j+k=d(F)$ and $mj+nk=m_{\infty}(F)$. 
A direct computation implies that  
\begin{align*}
\mathcal{P}_{\infty,F}(t) = \sum\limits_{\substack{(i,j,k) \in \mathbb{N}^3 \\ e_{ijk} \in \mathcal{M}_{\infty}(F)}} {c_{\infty}(f_{ijk}) \cdot 
\left( t(t-1)\cdots (t-m+1) \right)^{j+k} \cdot \left( (t-m) \cdots (t-n+1) \right)^k}.
\end{align*}
This polynomial can be rewritten as:
\begin{equation} \label{EQ:indicialpol2}
\mathcal{P}_{\infty,F}(t) = A^{\frac{m_\infty(F)}{m}} \cdot 
\sum\limits_{\substack{k=0,\ldots,n \\ e_{ijk} \in \mathcal{M}_{\infty}(F)}} {c_{\infty}(f_{ijk}) \left( \frac{B}{A^{\frac{(n-m)}{m}}} \right)^k}, 
\end{equation}
where $A=t(t-1)\cdots (t-m+1)$ and $B=(t-m) \cdots (t-n+1)$. 
The sum in~\eqref{EQ:indicialpol2} can be viewed as the evaluation of the univariate polynomial 
\[
 h(T) = \sum\limits_{\substack{k=0,\ldots,n \\ e_{ijk} \in \mathcal{M}_{\infty}(F)}} {c_{\infty}(f_{ijk}) T^k} 
 \ \ \text{ at } \ \ T=\frac{B}{A^{\frac{(n-m)}{m}}}.
\]
Since the projection which maps $e_{ijk}$ to $k$ is injective, we have that $h(T)$ is nonzero. 
On the other hand, since $\frac{B}{A^{\frac{(n-m)}{m}}}$ is transcendental over $\K$, 
we conclude that $\mathcal{P}_{\infty,F}$ is nonzero.
\end{proof}


\section{Rational solutions of maximally comparable AODEs} \label{sec:RationalSolutions}




It is well-known that poles of rational solutions of a linear ODE with polynomial coefficients 
only occur at the zeros of the highest coefficient of the equation (see~\cite{Ince1926}).
This fact is no longer correct when we consider nonlinear AODEs in general.
In this section, we describe a class of AODEs in which the above fact is still true. 
In order to do that, we first need to define what is the "highest" coefficient in the nonlinear case. 
To do so, we equip the set of monomials in $y$ and its derivatives with a suitable partial order (Definition~\ref{DEF:partialord}). 
We show in Theorem~\ref{thm:movpol} that if the given AODE has the greatest monomial with respect to this ordering, 
then the poles of its rational solutions can only occur at the zeros of the corresponding coefficient. 
Together with Proposition~\ref{PROP:LaurentSeriesSols}, we give a sufficient condition for such AODEs to have bounds for the orders of their poles, 
therefore one can determine their rational solutions if there is any.

\begin{definition} \label{DEF:partialord}
Assume that $n \in \N$.
For each $I,J \in \N^{n+1}$,
we say that $I \gg J$ 
if $||I|| \geq ||J||$ and $||I||+||I||_\infty > ||J||+||J||_\infty$.
\end{definition}

It is straightforward to verify that the order defined as above is a strict partial ordering on $\N ^{n+1}$, 
\ie the following properties hold for all $I,J,K \in \N ^{n+1}$:
\begin{enumerate}\renewcommand{\theenumi}{\roman{enumi}}\renewcommand{\labelenumi}{(\theenumi)}
\item irreflexivity: $I \not\gg I$;
\item transitivity: if $I \gg J$ and $J \gg K$, then $I \gg K$;
\item asymmetry: if $I \gg J$, then $J \not\gg I$.
\end{enumerate}

For $I,J \in \N ^{n+1}$, we say that $I$ and $J$ are \emph{comparable} if either $I \gg J$ or $J \gg I$.
Otherwise, they are called \emph{incomparable}.
It is clear that the order $\gg$ is not a total order on $\N^{n+1}$.
For example, $(2,0)$ and $(0,1)$ are incomparable. For a given point $I$ in $\mathbb{N}^{n+1}$, 
it is straightforward to verify that the number of points that are incomparable to $I$ is finite.


Let $S$ be a subset of $\N ^{n+1}$. An element $I \in S$ is called the \emph{greatest element of} $S$ 
if $I \gg J$ for every $J \in S \setminus \{I\}$.
By the asymmetry property of $\gg$,  the set $S$ has at most one greatest element.
This motivates the following definition.

\begin{definition} \label{def:MaximallyComparable}
An AODE $F(y)=0$ is called \emph{maximally comparable} 
if $\mathcal{E}(F)$ admits a greatest element with respect to $\gg$.
In this case, the corresponding monomial is called the \emph{highest monomial}, and the coefficient of the highest monomial 
is called the \emph{highest coefficient}.
\end{definition}

The term \emph{maximally comparable} already appeared in \cite{Vo2018}.
In \cite[Section~3]{Vo2018}, the authors defined maximally comparable first-order AODEs and studied their rational solutions.
The authors also showed that most of first-order AODEs are maximally comparable. 
Here, we extend the authors' work to the class of higher-order AODEs.
We will see later that a big part of high-order AODEs in literature are also maximally comparable.
The following theorem can be viewed as a generalization of \cite[Theorem~3.4]{Vo2018}.

\begin{theorem}\label{thm:movpol}
Let $F(y)=\sum\limits_{I \in \mathbb{N}^{n+1}} f_{I}y^{i_0} (y')^{i_1} \ldots (y^{(n)})^{i_n} \in \mathbb{K}[x]\{y\}$ 
be a differential polynomial of order $n > 0$. 
Assume that $F(y)=0$ is maximally comparable, 
and $I_0$ is the greatest element of $\mathcal{E}(F)$ with respect to $\gg$.
Then the poles of a rational solution of $F(y) = 0$ can only occur at infinity or at the zeros of $f_{I_0}(x)$.
\end{theorem}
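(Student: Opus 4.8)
The plan is to argue by contradiction at a finite pole. Suppose $z$ is a rational solution of $F(y)=0$ and that $x_0 \in \mathbb{K}$ is a pole of $z$; since $z$ is rational, $r := \ord_{x_0}(z) \geq 1$ and $z$ is a Laurent series at $x_0$. Assume, toward a contradiction, that $f_{I_0}(x_0) \neq 0$. The goal is to show that the term of $F(z)$ indexed by the greatest element $I_0$ is strictly more singular at $x_0$ than every other term, so that it cannot be cancelled and $F(z)$ has a genuine pole at $x_0$, contradicting $F(z)=0$.

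First I would record the order at $x_0$ of each term. Since $\ord_{x_0}(z^{(k)}) = r + k$, the monomial $y^{i_0}(y')^{i_1}\cdots(y^{(n)})^{i_n}$ evaluated at $z$ has order $r\,||I|| + ||I||_\infty$, so the full term $f_I\, z^{i_0}\cdots(z^{(n)})^{i_n}$ has order $r\,||I|| + ||I||_\infty + \ord_{x_0}(f_I)$ at $x_0$. Because the coefficients lie in $\mathbb{K}[x]$, we have $\ord_{x_0}(f_I) \leq 0$ for every $I$, while $f_{I_0}(x_0)\neq 0$ forces $\ord_{x_0}(f_{I_0}) = 0$. Hence it suffices to prove the purely combinatorial inequality $r\,||I_0|| + ||I_0||_\infty > r\,||I|| + ||I||_\infty$ for every $I \in \mathcal{E}(F)\setminus\{I_0\}$ and every $r \geq 1$. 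This is exactly what the ordering $\gg$ was designed to guarantee: writing $\delta = ||I_0|| - ||I||$, the definition of $I_0 \gg I$ gives $\delta \geq 0$ and $(||I_0||+||I_0||_\infty) - (||I||+||I||_\infty) > 0$, so for $r \geq 1$,
\[
(r\,||I_0|| + ||I_0||_\infty) - (r\,||I|| + ||I||_\infty) = r\delta + (||I_0||_\infty - ||I||_\infty) \geq \delta + (||I_0||_\infty - ||I||_\infty) > 0,
\]
using $(r-1)\delta \geq 0$. Thus the $I_0$-term strictly dominates all others in order at $x_0$.

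Finally I would check that this dominant term actually survives, \ie that its lowest coefficient does not vanish. Using the formula $c_{x_0}(z^{(k)}) = c_{x_0}(z)\prod_{s=1}^k(-r-s+1)$ from the proof of Proposition~\ref{PROP:LaurentSeriesSols}, the lowest coefficient of the $I_0$-term is $c_{x_0}(f_{I_0})\, c_{x_0}(z)^{||I_0||}\prod_{s=1}^n(-r-s+1)^{||I_0||_s}$. Since $r \geq 1$ and $s \geq 1$, each factor $-r-s+1 \leq -1$ is nonzero, and $c_{x_0}(f_{I_0}) \neq 0$, $c_{x_0}(z) \neq 0$ by hypothesis; hence this lowest coefficient is nonzero. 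Consequently the Laurent expansion of $F(z)$ at $x_0$ has a nonzero term of order $r\,||I_0|| + ||I_0||_\infty$ and no term of strictly higher order, so $F(z) \neq 0$, contradicting that $z$ solves $F(y)=0$. Therefore $f_{I_0}(x_0)=0$ at every finite pole $x_0$ of $z$, which is precisely the claim (with poles at infinity left unconstrained).

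The step demanding the most care is the nonvanishing of the lowest coefficient of the dominant term: the whole argument hinges on $r \geq 1$, which is exactly what makes each factor $-r-s+1$ strictly negative and thus prevents the leading term from accidentally dropping out. The dominance inequality itself is routine once one unwinds the definition of $\gg$, and is in fact the reason that partial order is defined the way it is.
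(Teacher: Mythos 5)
Your proof is correct. It differs from the paper's own proof in organization rather than in substance: the paper derives the contradiction by invoking Proposition~\ref{PROP:LaurentSeriesSols}, first showing $\mathcal{M}_{x_0}(F)=\{I_0\}$, then ruling out its case (ii) because the indicial polynomial $c_{x_0}(f_{I_0})\cdot\prod_{r=0}^{n-1}(-t-r)^{||I_0||_{r+1}}$ has no positive integer root, and ruling out its case (i) by computing $b_{x_0}(F)<1$. You instead inline the underlying Laurent-series computation and never cite the proposition as a dichotomy: your single dominance inequality $r\delta+\bigl(||I_0||_\infty-||I||_\infty\bigr)>0$ covers at once the indices in $\mathcal{D}(F)$ (where $\delta=0$, which is exactly the paper's step $\mathcal{M}_{x_0}(F)=\{I_0\}$) and the indices outside $\mathcal{D}(F)$ (where, at $r=1$, it is equivalent to the paper's bound $b_{x_0}(F)<1$), and your nonvanishing of the lowest coefficient of the $I_0$-term is precisely the statement that the indicial polynomial does not vanish at $t=r$. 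Both arguments hinge on the same two facts: $\ord_{x_0}f_I\le 0$ for polynomial coefficients, with equality at $I_0$ by the contradiction hypothesis, and $\prod_{s=1}^{n}(-r-s+1)^{||I_0||_s}\neq 0$ for integer $r\ge 1$. What your version buys is self-containedness and a cleaner unification of the two cases --- one sees directly that $F(z)$ has a genuine pole of order $r\,||I_0||+||I_0||_\infty$, hence cannot vanish; what the paper's version buys is brevity, since Proposition~\ref{PROP:LaurentSeriesSols} is needed anyway for the algorithmic order bounds and its reuse makes the theorem's proof a short corollary-style argument.
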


\begin{proof}
We prove the above claim by contradiction. 
Suppose that there is $x_0 \in \mathbb{K}$ such that~$x_0$ is a pole of order $r \geq 1$ 
of a rational solution of the AODE $F(y)=0$, and $f_{I_0}(x_0) \neq 0$. Then $\ord_{x_0}f_{I_0}=0$.

We first prove that $\mathcal{M}_{x_0}(F)=\{I_0\}$. 
Since $I_0$ is the greatest element of $\mathcal{E}(F)$ with respect to $\gg$, 
we see that $||I_0|| \geq ||J||$ for all $J \in \mathcal{E}(F)$. So $I_0 \in \mathcal{D}(F)$. 
Now let us fix any $J \in \mathcal{D}(F) \setminus \{I_0\}$. Since $||I_0||=||J||$ and $||I_0||+||I||_\infty>||J||+||J||_\infty$, 
we have that $||I_0||_\infty>||J||_\infty$. 
Therefore, we conclude that $\ord_{x_0}(f_{I_0}) + ||I_0||_\infty >  \ord_{x_0}(f_J)+||J||_\infty$ 
because $\ord_{x_0}f_{I_0} = 0 \geq \ord_{x_0}(f_J)$. 
In other words, $I_0$ is the only element of $\mathcal{M}_{x_0}(F)$.

Since $\mathcal{M}_{x_0}(F) = \{I_0\}$, the indicial polynomial at $x = x_0$ has the form
$$\mathcal{P}_{x_0,F}(t) = c_{x_0}(f_{I_0}) \cdot \prod\limits_{r=0}^{n-1}{(-t-r)}^{||I_0||_{r+1}}.$$ 
It is straightforward to see that $\mathcal{P}_{x_0,F}(t)$ has no positive integer root.
Due to Proposition~\ref{PROP:LaurentSeriesSols} and $r \geq 1$, 
we have $\mathcal{E}(F) \setminus \mathcal{D}(F) \neq \emptyset$ and 
\begin{align*}
r & \leq b_{x_0}(F) = 
\max \left \{ \frac{\ord_{x_0}(f_J)+||J||_\infty-||I_0||_\infty}{||I_0||-||J||} \mathrel{\Big |} J \in \mathcal{E}(F) \setminus \mathcal{D}(F) \right \}\\
& = \max \left 
\{ 1- \frac{-\ord_{x_0}(f_J)+(||I_0||+||I_0||_\infty)-(||J||+||J||_\infty)}{||I_0||-||J||} \mathrel{\Big |} J \in \mathcal{E}(F) 
\setminus \mathcal{D}(F) \right \}\\
&<1.
\end{align*}
This contradicts the assumption that $r \geq 1$. 
\end{proof}

 The above theorem implies that for maximally comparable AODEs, 
there are only finitely many candidates for poles of rational function solutions. 
Moreover, the poles of rational functions, if there is any, occur only at the zeros of the highest coefficient with respect 
to the order $\gg$ or at infinity.
This can be considered as a generalization to nonlinear AODEs of the same fact for linear ordinary differential equations.
Once a candidate for poles of a rational solution is found, one may use Proposition~\ref{PROP:LaurentSeriesSols} 
to bound the order at this candidate. As we mentioned it (Example~\ref{exa:Nondicritical}) before, 
Proposition~\ref{PROP:LaurentSeriesSols} may fail to give the order bound at certain points, as the following example illustrates.

\begin{example} \label{EX:nobound}
Consider the following AODE:
\[
 F(y) = x^3 y y''' + x y y' - x (y')^2 + y y' = 0.
\]
It is straightforward to verify that $F(y) = 0$ is maximally comparable. 
By Theorem~\ref{thm:movpol}, we know that the poles of rational solutions of $F(y) = 0$ can only be $0$. 
However, a direct calculation implies that $\mathcal{P}_{0,F}(t) = 0$. 
Therefore, we can not give the order bound at zero by using Proposition~\ref{PROP:LaurentSeriesSols}.
\end{example}

In order to compute rational solutions of a given maximally comparable AODEs, we impose the following property to it  
so that we bound the order of candidates for poles of its rational solutions.

\begin{definition} \label{EX:cmc}
Let $F(y) = 0$ be a maximally comparable AODE with the highest coefficient $f(x)$ with respect to $\gg$. 
We say that $F(y) = 0$ is \emph{completely maximally comparable} 
if $\mathcal{P}_{x_0,F}(t)$ is a non-zero polynomial for every root $x_0$ of $f(x)$.
\end{definition}

The following is a sufficient condition for a maximally comparable AODE to be complete.

\begin{proposition} \label{PROP: cmcindicial}
Let $F(y) = 0$ be a maximally comparable AODE. 
If $\mathcal{D}(F)$ is a totally ordered set with respect to the ordering $\gg$,
then for each $x_0 \in \mathbb{K} \cup \{\infty\}$, 
we have that~$\mathcal{P}_{x_0,F}(t) \neq 0$.
\end{proposition}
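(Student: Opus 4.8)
The plan is to prove that $\mathcal{P}_{x_0,F}(t)$ is nonzero by locating a single power of $t$ whose coefficient survives every cancellation, namely the top-degree term. The starting point is a degree computation for the building blocks of the indicial polynomial. For each $I \in \mathcal{D}(F)$, the factor $\prod_{r=0}^{n-1}(-t-r)^{||I||_{r+1}}$ (respectively $\prod_{r=0}^{n-1}(t-r)^{||I||_{r+1}}$ when $x_0=\infty$) is a polynomial in $t$ of degree $\sum_{r=0}^{n-1}||I||_{r+1}$. Re-indexing and interchanging the order of summation gives $\sum_{r=0}^{n-1}||I||_{r+1}=\sum_{k=1}^{n}||I||_k=\sum_{j=1}^{n} j\,i_j=||I||_\infty$, and the leading coefficient of the product is $(-1)^{||I||_\infty}$ (respectively $1$). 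Since $c_{x_0}(f_I)\neq 0$ for every $I \in \mathcal{E}(F)$ by the definition of the lowest coefficient, the summand of $\mathcal{P}_{x_0,F}$ indexed by $I$ is a polynomial in $t$ of degree exactly $||I||_\infty$ with nonzero leading coefficient.

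Next I would exploit the total-ordering hypothesis to separate these degrees. Every $I \in \mathcal{D}(F)$ satisfies $||I||=d(F)$, so for $I,J \in \mathcal{D}(F)$ the relation $I \gg J$ collapses to $||I||_\infty > ||J||_\infty$. Hence, if $\mathcal{D}(F)$ is totally ordered by $\gg$, then the map $I \mapsto ||I||_\infty$ is injective on $\mathcal{D}(F)$, and in particular on the subset $\mathcal{M}_{x_0}(F)$.

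Combining the two steps finishes the argument. The summands of $\mathcal{P}_{x_0,F}(t)$ are indexed by $\mathcal{M}_{x_0}(F)$ and have pairwise distinct degrees $||I||_\infty$, so there is a unique index $I^{*} \in \mathcal{M}_{x_0}(F)$ attaining the maximal degree. No other summand contributes to the coefficient of $t^{||I^{*}||_\infty}$, so $\deg_t \mathcal{P}_{x_0,F}=||I^{*}||_\infty$ with leading coefficient $(-1)^{||I^{*}||_\infty} c_{x_0}(f_{I^{*}})\neq 0$ (respectively $c_{\infty}(f_{I^{*}})\neq 0$). Therefore $\mathcal{P}_{x_0,F}(t)\neq 0$, as claimed, and the case $x_0=\infty$ runs verbatim with $(-t-r)$ replaced by $(t-r)$.

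I do not expect a serious obstacle here. The only point requiring a little care is the degree identity $\sum_{r=0}^{n-1}||I||_{r+1}=||I||_\infty$, together with the observation that, restricted to $\mathcal{D}(F)$, the order $\gg$ is governed entirely by $||\cdot||_\infty$. Once these are in hand, the nonvanishing of the indicial polynomial follows from the impossibility of cancelling a strictly-highest-degree term whose coefficient is a nonzero lowest coefficient.
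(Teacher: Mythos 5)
Your proposal is correct and follows essentially the same route as the paper's own proof: both arguments observe that on $\mathcal{D}(F)$ the order $\gg$ reduces to comparison of $||\cdot||_\infty$, so the total-ordering hypothesis makes the degrees $||I||_\infty$ of the summands of $\mathcal{P}_{x_0,F}(t)$ pairwise distinct, and the top-degree term therefore cannot cancel. Your write-up simply spells out the details (the degree identity $\sum_{r=0}^{n-1}||I||_{r+1}=||I||_\infty$ and the nonzero leading coefficients) that the paper leaves implicit.
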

\begin{proof}
Assume that $x_0 \in \mathbb{K} \cup \{\infty\}$. 
Since $F(y) = 0$ is a completely maximally comparable AODE, then for each $I, J \in \mathcal{M}_{x_0}(F)$ with $I \neq J$, 
we have that $||I||_\infty \neq ||J||_\infty$. 
On the other hand, for each $I \in \mathcal{M}_{x_0}(F)$, the degree of the polynomial $\prod\limits_{r=0}^{n-1}{(-t-r)}^{||I||_{r+1}}$ 
is exactly $||I||_\infty$. Above all, we conclude that $\mathcal{P}_{x_0,F}(t) \neq 0$.
\end{proof}

We can always give an order bound for candidates of rational solutions of 
completely maximally comparable AODEs by using Proposition~\ref{PROP:LaurentSeriesSols}. 
Combined with the partial fraction decomposition of a rational function, 
we present the following algorithm for determining all rational solutions of a completely maximally comparable AODE.

\begin{algo} \label{ALGO:ratsol}
Given a completely maximally comparable AODE $F(y) = 0$, compute all its rational function solutions.
\begin{itemize}
\item [(1)] Compute the greatest element $I_0$ of $\mathcal{E}(F)$ with respect to $\gg$. 
Compute distinct roots $x_1,\ldots,x_m$ of $f_{I_0}(x)$ in $\K$. 
\item [(2)] For $i \in \{1, \ldots, m \}$, 
compute an order bound $r_i$ for rational solutions of $F(y) = 0$ at $x = x_i$ by Proposition~\ref{PROP:LaurentSeriesSols}. 
Similarly, compute the order bound $N$ for rational solutions of the equation at infinity.
\item [(3)] Make an ansatz with the partial fraction decomposition
\begin{equation} \label{PFR}
z =\sum\limits_{i=1}^m {\sum\limits_{j=1}^{r_i}{\frac{c_{ij}}{(x-x_i)^j}}}+\sum\limits_{k=0}^N {c_ix^i} \eqspace,
\end{equation}
where the $c_{ij}$ and $c_i$ are unknown. Substitute~\eqref{PFR} into $F(y) = 0$ 
and solve the corresponding algebraic equations by using Gr\"{o}bner bases. 
\item [(4)] Return the solutions from the above step. 
\end{itemize}
\end{algo}

 The termination of the above algorithm follows from Proposition~\ref{PROP:LaurentSeriesSols}.
The correctness follows from Theorem~\ref{thm:movpol}.



\begin{example}
Consider the differential equation
\[
\begin{array}{lll}
 F(y) & = & x^2(x-1)^2y''^2+4x^2(x-1)y'y''-4x(x-1)yy'' + \\
      &   & 4x^2y'^2-8xyy'+4y^2-2(x-1)y'' \\
      & = & 0.
\end{array}
\]
We first collect some information about the exponents of terms of $F(y)$.

\begin{center}
\begin{tabular}{|c c c c c|}
\hline
$I \in \mathcal{E}(F)$ & $||I||$ & $||I||_\infty$ & $||I||+||I||_\infty$ & $f_I$\\
\hline
$(0,0,2)$ & $2$ & $4$ & $6$ & $x^2(x-1)^2$\\
$(0,1,1)$ & $2$ & $3$ & $5$ & $4x^2(x-1)$\\
$(1,0,1)$ & $2$ & $2$ & $4$ & $-4x(x-1)$\\
$(0,2,0)$ & $2$ & $2$ & $4$ & $4x^2$\\
$(1,1,0)$ & $2$ & $1$ & $3$ & $-8x$\\
$(2,0,0)$ & $2$ & $0$ & $2$ & $4$\\
\hline
$(0,1,0)$ & $1$ & $1$ & $2$ & $-2(x-1)$\\
\hline
\end{tabular}
\end{center}

In the above table, $\mathcal{D}(F)$ consists of the first $6$ elements of $\mathcal{E}(F)$, and $d(F)=2$. 
The first one $(0,0,2)$ is the greatest element of $\mathcal{E}(F)$ with respect to $\gg$. 

By Theorem~\ref{thm:movpol}, the poles of a rational solution of $F(y) = 0$ 
can only occur at the zeros of the polynomial $x^2(x-1)^2$, which are $0$ and $1$, and probably at infinity.

A simple computation based on Proposition~\ref{PROP:LaurentSeriesSols} shows that the orders of poles of a rational 
solution of $F(y) = 0$ at $0, 1$, and 
infinity are at most $0,1$, and $1$, respectively. 

Hence, we make an ansatz of the form:  
$$z = \frac{c_1}{x-1}+c_2+c_3x \ \ \text{ for some } \ \ c_1,c_2,c_3 \in \mathbb{K}.$$
Substituting $z$ into $F(y) = 0$ and solving the corresponding algebraic equations, 
we find that the rational solutions of $F(y) = 0$ are $c_3 x$ and $\frac{1}{x-1}+c_3 x$, 
where $c_3$ is an arbitrary constant in $\K$.
\end{example}

%

\section{Experimental results} \label{sec:ExperimentalResults}


Some new classes of AODEs have been introduced through previous sections based on properties of 
their polynomial and rational solutions. They are: noncritical, maximally comparable and completely maximally comparable AODEs.
%
In this section, we do some statistical investigation for noncriticality and 
the (completely) maximal comparability of AODEs 
from the famous collection of differential equations by Kamke \cite{Kamke}. 
The corresponding \texttt{Maple} worksheet is available in:

\begin{center}
 \href{https://yzhang1616.github.io/KamkeODEs.mw}{https://yzhang1616.github.io/KamkeODEs.mw}
\end{center}

The worksheet requires the availability of the following \texttt{Maple} package:

\begin{center}
 \href{https://yzhang1616.github.io/KamkeODEs.mpl}{https://yzhang1616.github.io/KamkeODEs.mpl}
\end{center}

There are 834 AODEs in Kamke's collection. All of them are noncritical. 
It means that our method can be used to determine all polynomial solutions, if there is any, 
of each AODE from Kamke's collection. Among them, there are 655 maximally comparable AODEs ($\approx$ 78.54 \%).  
All of the maximally comparable AODEs are complete.

The class of AODEs covers around 79.66 \% of the entire collection of ODEs. 
The remaining ODEs have coefficients involving trigonometric functions ($\sin x, \cos x$,...), 
hyperbolic functions ($\sinh x, \cosh x$, ...),
exponential functions $e^x$, logarithmic functions $\log x$, or power functions with parameters in the exponents 
($x^\alpha, y^\beta$, ...). For certain choices of the parameters, 
the latter ODEs will become algebraic. More precisely, 
there are 35 ODEs containing parameters in the power functions. If the parameters are chosen in a 
suitable way such that the corresponding ODEs are algebraic, then all of them are noncritical and 21 among them (60 \%) 
are completely maximally comparable.

\section*{Acknowledgement}
We thank Matteo Gallet and Christoph Koutschan for valuable suggestions on revising our paper. 

\bibliographystyle{abbrv}

\end{document}